\spnewtheorem{metatheorem}[theorem]{Metatheorem}{\bfseries}{\itshape}
\newcommand{\cmark}{\ding{51}}
\newcommand{\complexityclass}[1]{\textbf{#1}\xspace}
\newcommand{\computproblem}[1]{\textsc{#1}\xspace}
\renewcommand{\P}{\complexityclass{P}}
\newcommand{\NP}{\complexityclass{NP}}
\newcommand{\APX}{\complexityclass{APX}}
\newcommand{\PSPACE}{\complexityclass{PSPACE}}
\newcommand{\NPSPACE}{\complexityclass{NPSPACE}}
\newcommand{\MTSAT}{\computproblem{Max-3-Sat}}
\newcommand{\LEM}{\computproblem{Lemmings}}
\newcommand{\CLI}{\computproblem{Climbers}}
\newcommand{\FLO}{\computproblem{Floaters}}
\newcommand{\BLO}{\computproblem{Blockers}}
\newcommand{\BOM}{\computproblem{Bombers}}
\newcommand{\BUI}{\computproblem{Builders}}
\newcommand{\BAS}{\computproblem{Bashers}}
\newcommand{\MIN}{\computproblem{Miners}}
\newcommand{\DIG}{\computproblem{Diggers}}
\newcommand{\WGW}{\computproblem{Weighted Graph Walk}}
\newcommand{\QBF}{\computproblem{Quantified Boolean Formula}}
\begin{document}
\mainmatter

\title{Lemmings Is \PSPACE-Complete}

\titlerunning{Lemmings Is \PSPACE-Complete}

\author{Giovanni Viglietta}

\authorrunning{Giovanni Viglietta}

\tocauthor{Giovanni Viglietta}

\institute{University of Ottawa, Canada,\\
\email{viglietta@gmail.com}}

\maketitle

\begin{abstract}
Lemmings is a computer puzzle game developed by DMA Design and published by Psygnosis in 1991, in which the player has to guide a tribe of lemming creatures to safety through a hazardous landscape, by assigning them specific skills that modify their behavior in different ways. In this paper we study the optimization problem of saving the highest number of lemmings in a given landscape with a given number of available skills.

We prove that the game is \PSPACE-complete, even if there is only one lemming to save, and only Builder and Basher skills are available. We thereby settle an open problem posed by Cormode in 2004, and again by Fori\v sek in 2010. However we also prove that, if we restrict the game to levels in which the available Builder skills are only polynomially many (and there is any number of other skills), then the game is solvable in {\NP}\@. Similarly, if the available Basher, Miner, and Digger skills are polynomially many, the game is solvable in \NP.

Furthermore, we show that saving the maximum number of lemmings is \APX-hard, even when only one type of skill is available, whatever this skill is. This contrasts with the membership in \P of the decision problem restricted to levels with no ``deadly areas'' (such as water or traps) and only Climber and Floater skills, as previously established by Cormode.
\end{abstract}

\section{Introduction}\label{s1}
Lemmings is a popular computer game originally developed by DMA Design for PC and Commodore Amiga. Since its first release in 1991, by Psygnosis, several ports, sequels and imitations have appeared, for various systems. The game revolves around the behavior of some creatures called \emph{lemmings}, which deterministically walk across a landscape, turning around at walls, and blindly falling into pitfalls or drowning in water. The player's goal is to guide the highest number of lemmings through the landscape, from their respective \emph{entrance locations} to any \emph{exit location}, within a certain amount of time. To do so, the player has an arsenal of \emph{skills} that they can individually assign to lemmings, in order to modify their behavior in different ways, and hopefully prevent them from perishing. Because the number of available skills is limited, and most skills have just a temporary effect, the player must carefully plan their strategy, which makes Lemmings a challenging puzzle game.

In this paper we study the computational complexity of the optimization problem of saving the highest number of lemmings in a given level of the game, contributing to a fast-growing branch of research delightfully surveyed in \cite{survey2,survey1}.

In~\cite{mccarthy}, McCarthy first studied the game of Lemmings as an archetypical model for the logical approach to AI, attempting a formalization of the game using situation calculus, and discussing the features that make Lemmings a challenge to both experimental and theoretical AI\@. Spoerer later used genetic algorithms to generate successful solutions for a severely simplified version of Lemmings~\cite{spoerer}.

In~\cite{cormode}, Cormode established several complexity results related to another simplified version of Lemmings. In Cormode's model, the landscape contains no \emph{deadly areas} such as water, lava or traps, the player can assign skills to several different lemmings at the same time instant, and the time limit to complete each level is bounded by a polynomial in the size of the level itself. Cormode's paper shows the \NP-completeness of deciding if a level of such a game is solvable, even when only a single lemming is present, and only Digger skills are available. It is further shown that, if only Floater and Climber skills are available, then solvability is decidable in \P.

The rationale behind Cormode's assumption on the time limit is the claim that any level of Lemmings is either unsolvable or solvable within a polynomial amount of time. Later, in~\cite{forisek}, Fori\v sek disproved such a claim by constructing a class of levels whose solutions involve ``waiting'' an exponentially long time for certain configurations to occur, hence suggesting that the full Lemmings game may fail to be in {\NP}\@. Both Cormode and Fori\v sek conjectured that Lemmings, with no restrictions, is \PSPACE-complete. Cormode also asked for the computational complexity of classes of game instances with different combinations of initially available skills.

More recently, in~\cite{viglietta}, the author gave an independent \NP-hardness proof that works for instances with only Basher skills, and observed that a similar argument can be extended to instances with only Miner skills.

\begin{table}
\begin{center}
{\renewcommand{\arraystretch}{1.25}
\begin{tabular}{c||c|c|c||c|c|c}
Climbers & $\infty$ & poly & $\infty$ & & & \\
\hline
Floaters & $\infty$ & poly & $\infty$ & & & \\
\hline
Bombers & & poly & $\infty$ & & & \\
\hline
Blockers & & poly & $\infty$ & & & \\
\hline
Builders & & poly & $\infty$ & & & \\
\hline
Bashers & & poly & $\infty$ & & & poly \\
\hline
Miners & & poly & $\infty$ & & poly & \\
\hline
Diggers & & poly & $\infty$ & poly & & \\
\hline\hline
Time & $\infty$ & poly & $\infty$ & poly & poly & poly \\
\hline
Hazards & & & \cmark & & & \\
\hline
1 lemming & & & & \cmark & & \\
\hline\hline
& \P & \NP & \PSPACE & \NP-hard & \NP-hard & \NP-hard
\end{tabular}}
\end{center}
\caption{Previously known results}
\label{tab1}
\end{table}

Table~\ref{tab1} summarizes the aforementioned results. Note that the membership of Lemmings in \PSPACE is folklore. (A formal proof of this fact will be given in Theorem~\ref{t2}, as part of the proof that the game is \PSPACE-complete.)

\paragraph{\textbf{\emph{Our contribution.}}} In Section~\ref{s2} we define \LEM, the optimization problem of maximizing the number of saved lemmings in a given level. One of the novelties of our approach is that we do not aim at studying a simplified or conveniently modified version of the game, but our model incorporates every aspect and feature of the original Lemmings game developed by DMA Design, including the known glitches.\footnote{Refer to \url{http://www.lemmingsforums.com/index.php?topic=525.0}.}
(The only, obvious, exception is that we allow arbitrarily large levels with arbitrarily many \emph{objects}.)

In Section~\ref{s3} and Section~\ref{s4} we argue that what separates the ``harder'' levels of \LEM from the ``easier'' ones is the number of constructive and destructive commands that can be assigned to lemmings. Namely, if the number of Builder skills and the number of Basher skills are both exponential in the size of the level (or \emph{unlimited}), then we are able to construct a \PSPACE-complete class of instances with the bonus feature of having only one lemming each (Theorem~\ref{t2}). Conversely, we show that the decision problem restricted to instances with only polynomially many available Builder skills (and any number of other skills) belongs to \NP (Theorem~\ref{t1}). Similarly, if polynomially many Basher, Miner, and Digger skills (and any number of other skills) are available, the decision version belongs to \NP (Theorem~\ref{t1b}). We thus provide an adequate answer to the open problem of Cormode and Fori\v sek on the complexity of the full Lemmings game.

\begin{table}
\begin{adjustwidth}{-1.75cm}{}
{\renewcommand{\arraystretch}{1.25}
\begin{tabular}{c||c|c||c||c|c|c|c|c|c|c|c}
Climbers & $\infty$ & $\infty$ & & poly & & & & & & & \\
\hline
Floaters & $\infty$ & $\infty$ & & & poly & & & & & & \\
\hline
Bombers & $\infty$ & $\infty$ & & & & poly & & & & & \\
\hline
Blockers & $\infty$ & $\infty$ & & & & & poly & & & & \\
\hline
Builders & poly & $\infty$ & exp & & & & & poly & & & \\
\hline
Bashers & $\infty$ & poly & exp & & & & & & poly & & \\
\hline
Miners & $\infty$ & poly & & & & & & & & poly & \\
\hline
Diggers & $\infty$ & poly & & & & & & & & & poly \\
\hline\hline
Time & $\infty$ & $\infty$ & exp & poly & poly & poly & poly & poly & poly & poly & poly \\
\hline
Hazards & \cmark & \cmark & \cmark & \cmark & \cmark & \cmark & \cmark & \cmark & \cmark & \cmark & \cmark \\
\hline
1 lemming & & & \cmark & & & & & & & & \\
\hline\hline
& \NP & \NP & \PSPACE-h. & \APX-h. & \APX-h. & \APX-h. & \APX-h. & \APX-h. & \APX-h. & \APX-h. & \APX-h.\\
\end{tabular}}
\\\\
\caption{New results}
\label{tab2}
\end{adjustwidth}
\end{table}

In Section~\ref{s5} we discuss the restriction of \LEM to instances with only one type of available skill, and we give a proof of \APX-hardness of all such sub-games (there are eight in total, one for each skill type). This also implies that computing approximate solutions with a relative error lower than $1/8$ is \NP-hard (Corollary~\ref{cor1}). Combined with Cormode's results, this suggests that what makes levels with only Climber and Floater skills ``hard'' is the presence of \emph{traps}.

Section~\ref{s6} concludes the paper with some final remarks and open problems.

The main results of this paper are summarized in Table~\ref{tab2}. All our constructions have been tested with the DOS version of the original Lemmings game, and can be downloaded as a \emph{level pack} from \url{http://giovanniviglietta.com/files/lemmings/gadgets.dat}.

A conference version of this paper has appeared at FUN'14~\cite{viglietta2}.

\section{Game Definition}\label{s2}
We model \LEM as an optimization problem (refer to~\cite{ausiello}) whose instances are \emph{levels} of the form $\mathcal L=(time, terrain, steel, objects, lemmings, rate, skills)$.

\paragraph{\textbf{\emph{Time.}}} In Lemmings, time is discretized and subdivided into \emph{time units}. Accordingly, in each level of \LEM, \emph{time} is the amount of time units that the player has to achieve the goal of saving as many lemmings as possible. The value of \emph{time} is assumed to be at most exponential in the size of the landscape (see below), or \emph{unlimited}.

\paragraph{\textbf{\emph{Landscape.}}} \emph{terrain}, \emph{steel} and \emph{objects} collectively define the \emph{landscape} of the level:
\begin{itemize}
\item[$\bullet$] \emph{terrain} is a rectangular array of \emph{cells}, each of which is the size of a pixel and can be \emph{empty} or \emph{solid}. Informally, this is a bitmap containing the ``shape'' of the landscape: lemmings can freely walk across empty cells, but are stopped by solid cells. It is convenient to consider \emph{terrain} as (logically) partitioned into \emph{blocks} of $4\times 4$ cells.
\item[$\bullet$] \emph{steel} is a rectangular array that tells whether each \emph{terrain} block is ``made of steel'' or is ``permeable''. It may be viewed as a block-aligned ``mask'' that is overlaid on \emph{terrain}, and is used to check if solid \emph{terrain} cells may be ``excavated'' by Bombers, Bashers, Miners or Diggers (see below). Notice that each \emph{terrain} $4\times 4$ block is either entirely made of steel or entirely permeable, regardless of the amount of solid cells that it actually contains. In particular, even a block consisting of empty cells can be made of steel.
\item[$\bullet$] \emph{objects} is an array (of length polynomial in the size of \emph{terrain}) whose elements have a \emph{position} within the landscape, a \emph{trigger area}, a \emph{type}, and an optional \emph{delay} parameter (whose value is bounded by a polynomial). Like steel masks, trigger areas are block-aligned bitmaps that are overlaid on \emph{terrain}. However, if a block hosts the trigger area of some object, it cannot be made of steel, and hence it must be permeable. There are four types of objects:
\begin{itemize}
\item[$\circ$] \emph{Entrance}. Each lemming enters the level through an entrance. There may be several entrances in the same level (see below).
\item[$\circ$] \emph{Exit}. A lemming reaching the trigger area of an exit is ``saved'' and is removed from the game. There may be several exits in the same level.
\item[$\circ$] \emph{Deadly zone}. A lemming lying in the trigger area of a deadly zone instantly dies and is removed from the game. However, after a deadly zone has killed a lemming, it remains harmless for $k$ time units, where $k$ is the object's delay parameter, and then it becomes deadly again. During that window of $k$ time units, lemmings can safely traverse the trigger area. (Even if several lemmings enter the trigger area at the same time unit, only one is killed immediately.) Deadly zones with $k>0$ are represented in Lemmings as ``traps'', such as presses, gallows poles and electrocuting devices; deadly zones with $k=0$ are represented as water or lava, and they instantly kill every lemming that enters them.
\item[$\circ$] \emph{One-way wall}. The \emph{terrain} cells underlying its trigger area are perceived as permeable by Bashers and Miners going in one direction, and made of steel by Bashers and Miners going in the opposite direction (see below).
\end{itemize}
\end{itemize}

Notice that both \emph{steel} and the trigger areas of objects have a coarser resolution than \emph{terrain}, due to the file format that Lemmings uses to store levels. We also stress that a $4\times 4$ block cannot simultaneously be made of steel and be part of the trigger area of an object. These features will add an extra challenge to the design of our gadgets.

\paragraph{\textbf{\emph{Lemmings.}}} The \emph{lemmings} parameter of a level is the total amount of lemmings that the level contains (which is assumed to be bounded by a polynomial in the size of \emph{terrain}). Lemmings enter the game one at a time, at a frequency given by the parameter \emph{rate}. If several entrances are present, they release lemmings in turns, following an order determined by their position in the \emph{objects} array.

Upon entering the land, each lemming is facing right, and is normally a \emph{Faller}, which falls vertically through empty \emph{terrain} cells at constant speed, until it lands on a solid cell. Then it becomes a \emph{Walker}, which keeps walking straight (in the direction it is facing) as long as it can. In Lemmings, the sprite of a lemming is between nine and ten pixels high, depending on the animation frame. However, only one cell matters for collision detection with the landscape, which is the lemming's \emph{pin}. The pin is located one cell below the lemming's feet, and its exact position varies depending on the animation frame and the direction the lemming is facing.

On flat ground, a Walker's pin moves forward by eight cells every four time units. Between time units, the collision detection algorithm first moves the Walker's pin forward by one cell, no matter if it is solid or empty. Then \emph{terrain} cells are checked to determine the lemming's behavior.

If the pin has reached a solid cell, then the cells above are also checked. If the lowest empty cell is eight cells above the pin or higher, then the slope is too high and the Walker reverses its direction. Otherwise, the pin ``jumps'' above by at most two cells, and then goes further up by one cell per time unit, until the top is reached.

Otherwise, if the pin has reached an empty cell, the lemming falls down until it reaches a solid cell again. On the first time unit, the pin's position instantly drops by at most four cells. If a solid cell has not been reached yet, then the lemming becomes a Faller and its pin gradually moves down, by at most two cells per time unit. If the fall is longer than 63 cells (or crosses the bottom of the terrain), the lemming dies and is removed from the game.

Depending on the Walker's animation frame, the above procedure may be repeated between one and three times per time unit (on ``almost flat'' ground). Figure~\ref{f1a} illustrates an example, in which dots represent the final positions of the lemming's pin each time the collision detection algorithm is executed.

\begin{figure}[h]
\centering
\subfigure[Walker]{\label{f1a}\includegraphics[scale=1]{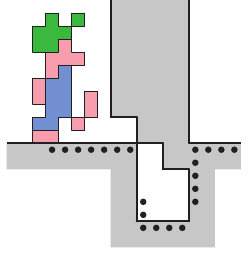}}\qquad \quad
\subfigure[Builder]{\label{f1b}\includegraphics[scale=1]{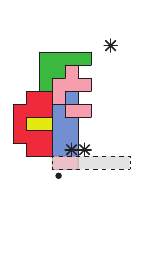}}\qquad \quad
\subfigure[Basher]{\label{f1c}\includegraphics[scale=1]{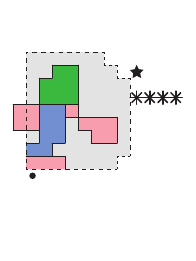}}
\caption{(a) Sequence of pins (black dots) of the lemming, as it walks rightward over the gray solid cells. (b) First brick of the stairway (dashed area), and the three cells that are tested for solidity (asterisks). (c) Cells dug on the first stroke (dashed area), the cell tested for permeability (star), and the four cells tested for solidity (asterisks).}
\label{f1}
\end{figure}

\paragraph{\textbf{\emph{Skills.}}} Finally, the level parameter \emph{skills} is an array containing the number of skills that the player can assign to lemmings. We will assume that all skill quantities are bounded by an exponential in the size of \emph{terrain}, or \emph{unlimited}. The skills are:
\begin{itemize}
\item[$\bullet$] \emph{Climber}. A permanent skill that makes a lemming climb vertical rows of more than six solid cells, at an average speed of one cell every two time units, instead of turning around like a Walker. As soon as a Climber reaches the ``top of a wall'', it starts behaving like a Walker again. If it hits a ``ceiling'' while it is climbing, it turns around and falls back down.
\item[$\bullet$] \emph{Floater}. A permanent skill that makes a lemming survive falls of any height. Floaters also fall slower than Fallers.
\item[$\bullet$] \emph{Bomber}. Makes a lemming explode after a small amount of time units. A Bomber keeps behaving normally until it actually blows up, also turning the surrounding \emph{terrain} cells from solid to empty, provided that the Bomber's pin lies on a permeable cell.
\item[$\bullet$] \emph{Blocker}. Makes a lemming stand still, and makes any Walkers, Builders, and Miners that come in contact with it turn around. A Blocker can be interrupted only by killing it (for instance by making it become a Bomber), or by excavating the solid \emph{terrain} cell containing its pin.
\item[$\bullet$] \emph{Builder}. Makes a lemming construct a ``stairway'' by turning empty \emph{terrain} cells into solid ones. Each ``brick'' of the stairway is six cells wide and one cell high, and is laid on top of the Builder's pin, as Figure~\ref{f1b} indicates. Then the Builder's pin is moved one cell up and two cells forward, and subsequently three \emph{test cells} are checked for solidity (these cells are represented as asterisks in the figure). If all the test cells are empty, the Builder lays a new brick. Otherwise, the Builder turns around and becomes a Walker. After laying 12 bricks, a Builder becomes a Walker anyway, and proceeds forward as usual.
\item[$\bullet$] \emph{Basher}. Makes a lemming ``dig'' a horizontal hole in the direction it is facing, by turning solid \emph{terrain} cells into empty cells. Upon assignment of the skill, the lemming checks the \emph{test cell} marked by a star in Figure~\ref{f1c}. If it is permeable (no matter if it is solid or empty), the lemming becomes a Basher and makes a hole shaped like the dashed area. It then proceeds digging forward at five cells per \emph{stroke}. It only stops when it falls into a hole (then it becomes a Faller), or when it encounters a steel cell in the location marked by the star (then it turns around and becomes a Walker), or when all the four cells marked by an asterisk are empty (then it becomes a Walker without turning around). One-way walls are treated as steel or permeable cells, depending on their orientation.
\item[$\bullet$] \emph{Miner}. Similar to Basher, but a Miner digs diagonally.
\item[$\bullet$] \emph{Digger}. Similar to Basher, but a Digger digs vertically.
\end{itemize}
Builders, Bashers, Miners, and Diggers can be interrupted by the player at any time by assigning them a different skill (that is not a Climber or a Floater skill). Blockers can be interrupted only by digging the solid \emph{terrain} cell on which they stand, or by assigning them a Bomber skill, which kills them.

We remark that Builder, Basher, and Miner skills can also be assigned ``ineffectively'', i.e., they may change a lemming's behavior without modifying \emph{terrain} cells. For instance, a Builder's first brick may end up on already solid cells. In this case, the cells remain solid, the lemming climbs on the brick, and it may immediately re-become a Walker if some of its test cells are solid. On the other hand, a Basher (and analogously a Miner) that is not facing solid or steel cells will stroke once and advance by a few cells without modifying \emph{terrain}. Then, if it still detects no solid cells ahead, it becomes a Walker. Otherwise, if it detects solid cells ahead after stroking once ineffectively, but it detects steel in the new steel test cell, it turns around and becomes a Walker. All these behaviors have the only net effect of delaying a Walker by a few time units, and possibly reversing its direction.

\paragraph{\textbf{\emph{Actions.}}} A player's \emph{action} is the assignment of a certain skill to a certain lemming at a certain time. In the computer game, actions are done by ``clicking'' on lemmings. At most one skill can be assigned per time unit. In particular, if several lemmings lie under the ``cursor'' at the same time, the skill is assigned only to one lemming. An action is encoded by a lemming's position in the lemmings array, a skill identifier, and a \emph{timestamp}.

A \emph{feasible solution} of \LEM is then a finite sequence of actions that are compatible with each other and with the given amount of available skills. To complete the definition of \LEM, we still need a \emph{measure function}, which is obviously the number of lemmings that the player saves within the time limit, by doing a certain sequence of actions given by a feasible solution.

\section{Instances Solvable in \NP}\label{s3}
In this section we study two restrictions of \LEM, and we prove that their decision versions are solvable in {\NP}\@. In the first restriction, the number of initially available Builder skills is bounded by a polynomial in the size of the landscape (Theorem~\ref{t1}). In the second restriction, the number of Basher, Miner, and Digger skills is polynomial (Theorem~\ref{t1b}).

These results extend~\cite[Lemma~1]{cormode} by Cormode, which states that \LEM is in \NP, provided that the time limit to solve a level is polynomial (and in particular there are polynomially many available skills).

We first introduce a subsidiary problem. A \emph{weighted directed multigraph} is a triplet of the form $G=(V,E,\mu)$, where $(V,E)$ is a directed multigraph and $\mu\colon E\to\mathbb N^m$ is a function assigning a \emph{weight} to each edge. The \WGW problem on input $(G,s,t,k)$ asks if the weighted directed multigraph $G=(V,E,\mu)$ has a walk from vertex $s\in V$ to vertex $t\in V$ whose \emph{total weight} is exactly $k\in\mathbb N^m$. The total weight of a walk is the sum of the weights of the edges that it traverses, taken with their multiplicity. (The sum in $\mathbb N^m$ is defined as $(a_1,a_2,\cdots,a_m)+(b_1,b_2,\cdots,b_m)=(a_1+b_1,a_2+b_2,\cdots,a_m+b_m)$.)

In the next lemma we prove that \WGW is solvable in {\NP}\@. Observe that a straightforward encoding of a walk (e.g., as an ordered list of edges) may not be a valid certificate for \WGW, because the length of such a walk is not necessarily polynomial in the size of the multigraph.

\begin{lemma}\label{lwgw}
$\WGW\in\NP$.
\end{lemma}
\begin{proof}
Let $(G,s,t,k)$ be an instance of \WGW on $G=(V,E,\mu)$. A certificate for \WGW is a function $\sigma\colon E\to\mathbb N$ whose encoding has size at most quadratic in the size of $(G,s,t,k)$. Intuitively, in a positive instance of \WGW, $\sigma(e)$ indicates the number of times edge $e\in E$ is traversed by a walk of total weight $k$ from $s$ to $t$. We make the verifier accept if and only if $\sigma$ passes the following tests. First, the edges of $G$ on which $\sigma$ is non-zero should induce a weakly connected subgraph of $G$ (which is easily verified in polynomial time). Second, $\sigma$ must satisfy $\sum_{e\in E}\sigma(e)\cdot\mu(e)=k$. Finally, the following equations must be satisfied: if $s=t$, then
\begin{equation}\label{eq:1}\sum_{(v,u)\in E}\sigma((v,u))-\sum_{(u,v)\in E}\sigma((u,v))=0 \quad \mbox{ for every } v\in V;\end{equation}
otherwise, if $s\neq t$,
\begin{equation}\label{eq:2}\sum_{(v,u)\in E}\sigma((v,u))-\sum_{(u,v)\in E}\sigma((u,v))=0 \quad \mbox{ for every } v\in V\setminus\{s,t\},\end{equation}
\begin{equation}\label{eq:3}\sum_{(s,v)\in E}\sigma((s,v))-\sum_{(v,s)\in E}\sigma((v,s))=1,\quad \mbox{ and}\end{equation}
\begin{equation}\label{eq:4}\sum_{(t,v)\in E}\sigma((t,v))-\sum_{(v,t)\in E}\sigma((v,t))=-1.\end{equation}
All the above sums can be computed in polynomial time, because the terms involved have size polynomial in $|(G,s,t,k)|$. In particular, each term of the form $\sigma(e)\cdot\mu(e)$ has size at most $|\sigma|\cdot |E|=O(|(G,s,t,k)|^3)$ (recall that, by assumption, $|\sigma|=O(|(G,s,t,k)|^2)$).

Now we prove the correctness of this verifier. Suppose that there is a walk in $G$ from $s$ to $t$ whose total weight is $k$, and hence the instance $(G,s,t,k)$ is positive. Let $w$ be the shortest such walk in terms of hops. We call an edge of $G$ whose weight is $(0,0,\cdots,0)$ a \emph{null edge}. Let $k=(k_1,k_2,\cdots,k_m)$, and let $\tau=k_1+k_2+\cdots +k_m$. Then, the number of times that $w$ traverses a non-null edge is at most $\tau$. Moreover, since $w$ is minimal, it cannot traverse more than $|V|$ consecutive null edges, or else some of them could be deleted without changing $w$'s total weight, nor its starting and ending vertices. It follows that the length of $w$ is at most $(\tau+1)|V|$. For each $e\in E$, if we let $\sigma(e)$ be the number of times that $w$ traverses edge $e$, we have $\sigma(e)\leqslant(\tau+1)|V|$. Note that, if $k^*$ is the maximum of the $k_i$'s, we have $\tau/m\leqslant k^*$, and therefore
$$\log_2\tau=\log_2\frac\tau m+\log_2 m\leqslant\log_2 k^*+\log_2 m\leqslant\sum_{i=1}^m\log_2 k_i+\log_2 m\leqslant |k|+\log_2 m.$$
Hence, recalling that $m$ is a constant, the size of $\sigma$'s representation is at most
$$|E|\cdot(\left\lfloor\log_2(\tau+1)|V|\right\rfloor+1)=O\left(|E|\cdot |k|+|E|\cdot\log|V|\right)=O(|(G,s,t,k)|^2).$$
Moreover, by construction, $\sigma$ passes all the tests, and therefore it is accepted by the verifier.

Conversely, suppose that some function $\sigma\colon E\to\mathbb N$ passes all the tests, causing the verifier to accept. Let us construct a directed multigraph $G'$ as follows. The vertices of $G'$ are the vertices of $G$ that are incident to an edge on which $\sigma$ is greater than zero. Also, for every $(v,u)\in E$, $G'$ has exactly $\sigma((v,u))$ copies of the directed edge $(v,u)$. Because $\sigma$ passes the above tests, $G'$ is weakly connected and satisfies equations~\eqref{eq:1} to~\eqref{eq:4}, implying that it is Eulerian or semi-Eulerian. Therefore $G$ has an Eulerian walk $w'$ from $s$ to $t$. Such an Eulerian walk directly translates into a walk $w$ on $G$ from $s$ to $t$ that traverses each edge $e\in E$ exactly $\sigma(e)$ times. But by assumption $\sum_{e\in E}\sigma(e)\cdot\mu(e)=k$, hence the total weight of $w$ is exactly $k$, which implies that the instance $(G,s,t,k)$ is indeed positive.
\qed
\end{proof}

We are now able to prove the two main theorems of this section.

\begin{theorem}\label{t1}
The decision version of \LEM, restricted to levels in which the Builder skills are polynomially many, belongs to \NP.
\end{theorem}
\begin{proof}
Recall that the total number of lemmings is polynomial in the number of \emph{terrain} cells. It follows that permanent skills, i.e., Climber and Floater skills, can be assigned only polynomially many times, and therefore involve a polynomial number of moves. The same holds, for obvious reasons, also for Bomber skills.

Observe that \emph{terrain} may change only at polynomially many time units. Indeed, \emph{terrain} can be altered only as a consequence of a player's action. Moreover, the only way to create new solid cells is via a Builder, which affects at most 72 cells. Hence the number of cells that can be restored is bounded by a polynomial. As a consequence, also the cells that can be destroyed is bounded by a polynomial.

Digger skills can be assigned only to lemmings that can effectively dig some solid cells, and hence they involve at most a polynomial number of moves.

Blocker skills can also be assigned polynomially many times, because a Blocker can be interrupted only by turning it into a Bomber, or by removing the terrain on which its pin lies.

Let us consider a feasible solution that saves a certain number of lemmings in a given level in the minimum number of time units. We will produce a certificate for such a solution that is verifiable in polynomial time. Observe that, by the above reasoning, there are at most polynomially many time instants at which \emph{terrain} changes, a lemming is eliminated, or a skill other than Basher or Miner is assigned. Part of the certificate for the given solution will be the sequence of all such configurations (plus an initial and a final configuration), each with a timestamp. Note that a complete description of a configuration can be given in polynomial space, because it only includes the state of each \emph{terrain} cell, the state and position of each lemming, the number of available skills, and possibly a player's action.

It remains to show that all the timespans can be encoded in polynomial space. Consider the polynomially many maximal timespans during which \emph{terrain} does not change, no lemming is eliminated, and only Basher and Miner skills are assigned. Let $[t, t']$ be one such maximal timespan (possibly with $t=t'$). Because there exist at most an exponential amount of combined configurations of all lemmings, the configuration at time $t'$ is reached also at some time $t''\leqslant t'$ such that $t''-t$ is bounded by an exponential. Since the solution we are considering is minimal by assumption, $t'=t''$. Therefore the timestamps of all the player's actions are bounded by an exponential in the size of the level (even if the level has no time limit), and hence all of them can be encoded in polynomial space.

Observe that the polynomially many transitions between consecutive time units in which either \emph{terrain} changes, or some lemming is eliminated, or moves other than Basher and Miner are made can be simulated in polynomial time, because each transition involves a constant number of operations and tests for each lemming. The remaining time intervals are polynomially many and may be exponentially long. Let $[t, t']$ be one such interval, and let the certificate contain the descriptions of the configurations at times $t$ and $t'$. We need to show how to certify that $t'$ can be reached from $t$ in the desired amount of time units and using the desired amount of Basher and Miner skills. Between $t$ and $t'$, lemmings do not interfere with each other (recall that \emph{terrain} does not change), and hence each of them follows a path that can be verified independently of the others.

Let us fix a lemming, and let us construct a weighted directed multigraph $G$ on the set of its possible positions and states in the landscape. $G$ has an edge from $v$ to $u$ if the lemming can go from $v$ to $u$ in one time unit (recall that $v$ and $u$ encode positions and states of the lemming). The edge $(v,u)$ has weight $(1,b,m)$, where $b$ and $m$ can take the value $0$ or $1$, depending if the player has to use a Basher or a Miner skill, respectively, to make the lemming go from $v$ to $u$. (Recall that Basher and Miner skills can occasionally be assigned to delay a Walker for a couple of time units, or to reverse its direction, without removing \emph{terrain} cells.) Clearly, $G$ can be constructed in polynomial time, because a lemming only has polynomially many possible positions and a constant number of possible states. The lemming also has a starting position and state, taken from the snapshot at time $t$, and a final position and state, taken from the snapshot at time $t'$ (since lemming are indistinguishable, the certificate must supply this information, as well). The number of time units between $t$ and $t'$ is simply $t'-t$, and the number of Basher and Miner skills that have to be assigned to each lemming between $t$ and $t'$ is supplied by the certificate (this information can be verified by simply adding together all the Basher and Miner skills that have to be assigned to individual lemmings, and checking if the sum is equal to the difference of available Basher and Miner skills in the configurations at times $t'$ and $t$).

Hence, what the verifier should assess is if there exists a walk in $G$ with assigned starting and ending vertices, and assigned total weight, express as the number of time units spent, and the number of Basher and Miner skills assigned. This is an instance of \WGW, for which there is a polynomial-time-verifiable certificate, due to Lemma~\ref{lwgw}.

Summarizing, the certificate includes polynomially many ``snapshots'' of the level taken at different times. Moreover, for each lemming, the certificate contains information oh how to make it move from each snapshot to the next in the correct amount of time and using the correct amount of skills. Since each piece of information has polynomial size and the lemmings are polynomially many, the certificate can be verified in polynomial time.
\qed
\end{proof}

\begin{theorem}\label{t1b}
The decision version of \LEM, restricted to levels in which the Basher, Miner, and Digger skills are polynomially many, belongs to \NP.
\end{theorem}
\begin{proof}
The proof follows the lines of the proof of Theorem~\ref{t1}. Climber, Floater, and Bomber skills can be assigned only once per lemming, hence they involve at most a polynomial number of actions. It follows that \emph{terrain} cells can turn from solid to empty at most polynomially many times (note that each assignment of a Bomber, Basher, Miner, or Digger skill can remove only polynomially many cells), and therefore also the Blocker skills that can be assigned are polynomially many (recall that a Blocker can be interrupted only by eliminating it or removing the solid cell on which it stands).

Once again, we are left with polynomially many timespans during which \emph{terrain} does not change, no lemming is eliminated, and only Builder skills can be assigned to ``ineffectively'' lay one or more bricks of a stairway and possibly make a lemming turn around. A certificate for these timespans exists due to Lemma~\ref{lwgw}, while the snapshots of the polynomially many other configurations can be given explicitly in the certificate.
\qed
\end{proof}

\section{\PSPACE-Complete Instances}\label{s4}
Next we show that there are classes of levels of \LEM that are \PSPACE-hard. Due to Theorems~\ref{t1} and~\ref{t1b}, it comes as no surprise that such levels have exponentially many (or unlimited) available Builder and Basher skills. As a bonus, there is only one lemming per level.

For the \PSPACE-hardness reduction, we apply a framework introduced in~\cite[Metatheorem~4.c]{viglietta}, which has also appeared in~\cite{nintendo}. The framework is based on a reduction from \QBF involving a player-controlled \emph{avatar}, a \emph{starting location}, an \emph{exit location}, several \emph{paths}, \emph{pressure plates} and \emph{doors}.

Doors are areas that can be traversed by the avatar if and only if they are \emph{open}. For each door, there are exactly two pressure plates located somewhere in the level, which open and close the door, respectively. Pressure plates are activated whenever the avatar traverses them, but we observe that the proof of~\cite[Metatheorem~4.c]{viglietta} goes through even if the pressure plates that \emph{open} doors are implemented as \emph{buttons}, i.e., the avatar is not forced to activate them upon traversal, but may or may not do it, at the player's will. Indeed, opening a door has the only effect of expanding the set of locations that can be reached by the avatar, so it is never ``wrong'' to do it whenever possible.

Hence we may implement a pressure plate ``indirectly'', as a path that starts from the location that should contain it, reaches the corresponding door gadget from the proper direction, and then leads back to where it started. In other words, we can incorporate pressure plates within their respective door gadgets. The new specification of door that we obtain is as follows: a door is a device that can be either open or closed, and is crossed by three separate paths:
\begin{itemize}
\item a \emph{traverse path}, which can be traversed by the avatar if and only if the door is in the open state;
\item a \emph{close path} which, when traversed by the avatar, causes the door to close;
\item an \emph{open path} which, when traversed by the avatar, allows the player to open the door, but does not force them to.
\end{itemize}
To better distinguish these doors from the pressure-plate-operated doors of~\cite{viglietta}, we call them \emph{stand-alone doors}.

Different stand-alone doors, along with the starting and exit locations, are connected by paths. These should be traversable by the avatar arbitrarily many times in at least one direction, and the avatar should not be able to move from one path to another unless they merge at a point (note that to solve the levels constructed in the proof of~\cite[Metatheorem~4.c]{viglietta}, the avatar never has to traverse the same path in two opposite directions). Also, when a path forks, the player should be able to choose which way the avatar goes, and the choice may change each time the avatar gets there. Paths should be able to ``cross'' each other without allowing the avatar to change direction when it reaches a crossroads. Note that such a \emph{crossover gadget} is not needed with pressure-plate-operated doors, but it is with stand-alone doors. Finally, the player may even be allowed to ``break'' a path, making it impassable by the avatar, as long as breaking it does not cause leakage to other paths.

With the above definitions, we can re-formulate~\cite[Metatheorem~4.c]{viglietta} as follows.

\begin{metatheorem}\label{meta}
If a game features stand-alone doors, paths with crossovers, and the player must guide an avatar from a starting location to an exit location, then the game is \PSPACE-hard.\hfill\qed
\end{metatheorem}

Now we are able to prove the main theorem of this section.

\begin{theorem}\label{t2}
\LEM is \PSPACE-complete, even restricted to levels with only one lemming, and only Builder and Basher skills.
\end{theorem}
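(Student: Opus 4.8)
\section*{Proof proposal}

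\textbf{Membership in \PSPACE.} The plan is to bound the space needed to verify a play. A configuration of the game consists of the current \emph{terrain} bitmap (polynomial size), the steel/objects data (fixed, polynomial), and, for the single lemming, its state (Faller/Walker/Builder/Basher/\dots), pin position, facing direction and animation frame --- all of constant size. The current time unit is at most exponential, hence storable in polynomially many bits, and likewise for the residual skill counts. One therefore guesses the player's moves on the fly: at each time unit, nondeterministically decide whether to assign a skill and to whom (here, the only lemming), then apply the deterministic transition, never storing more than the current configuration and a polynomial-size time counter. Accepting when the target number of lemmings has been rescued, this is an \NPSPACE\ procedure, so by Savitch's theorem \LEM\ is in \PSPACE\ (this bound does not rely on the one-lemming or restricted-skills hypotheses).

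\textbf{\PSPACE-hardness: overall structure.} I would reduce from \QBF, i.e.\ from deciding truth of a sentence $\exists x_1\,\forall x_2\,\exists x_3\cdots Q_n x_n\,\varphi$ with $\varphi$ in CNF. The level is built from three kinds of polynomial-size gadgets wired together; the single lemming plays the role of the evaluation token that walks the (exponential-size, only virtually present) game tree of the formula. Each Boolean variable $x_i$ is represented by a \emph{bit gadget}: a small region whose internal state is the presence or absence of a single brick at a designated cell, bounded on one side by a block of \emph{steel} so that a Basher digging toward it is stopped exactly after (or before) removing that brick. A bit is \emph{set} by steering the lemming onto that cell and assigning a Builder skill so that one step of the stairway fills the gap; it is \emph{cleared} by assigning a Basher skill along a corridor that terminates at the steel block, deleting the brick. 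A bit is \emph{read} nondestructively by sending the lemming through a fork whose two branches are distinguished by whether the brick is present: with the brick the lemming walks across to one exit corridor, without it the lemming drops a controlled distance into the other corridor and then walks out. The existential quantifier gadget lets the player freely choose which value to write before descending; the universal quantifier gadget is a ``door''-type construction that refuses to release the lemming toward the outer part of the level until it has descended once with the bit set and once with it cleared --- enforced by a second brick that records ``the other branch still owes a visit'' and blocks the only return path while it is present. At the bottom, a \emph{clause-checking gadget} is a sequence of clause sub-gadgets, each a fork on the three literal bits of that clause arranged so that the lemming survives the sub-gadget iff the clause is satisfied by the current assignment; surviving all $m$ clauses lets the lemming reach the exit. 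One then checks that the lemming can be rescued iff the \QBF\ is true, and that the construction is polynomial in $n+m$ even though the number of Builder skills consumed over a successful play is exponential --- which is exactly why, consistently with Theorem~\ref{t1}, the gadget uses an unrestricted supply of Builder skills.

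\textbf{The main obstacle: faithful implementation in genuine Lemmings physics.} Everything above is standard at the level of an abstract single-agent ``doors'' reduction; the real work --- and where I expect almost all of the effort to go --- is forcing these gadgets to behave as described under the \emph{actual} DMA Design mechanics spelled out in Section~\ref{s2}. The chief difficulties are: (i)~the $4\times 4$ block granularity of \emph{steel} and of object trigger areas, which is coarser than the pixel granularity of \emph{terrain}, so corridors, pits, fork thresholds and steel stoppers must be laid out to respect block alignment while still giving pixel-precise collision outcomes for the Walker's pin; (ii)~the exact Builder geometry --- six-wide, one-high steps, the three test cells, the twelve-brick cutoff, and the turn-around-on-obstruction rule --- which must be exploited so that a Build deposits exactly the one intended brick and leaves the lemming headed the right way; (iii)~the exact Basher behaviour --- the initial test cell, five cells per stroke, and termination on a steel or empty cell --- so that a Bash clears exactly one brick and no more, even across block boundaries, and so that an ``ineffective'' Bash (a stroke into non-solid terrain) cannot be abused by the player; (iv)~the absence of Floater and Climber skills, meaning every descent in a read-fork and every drop between gadgets must be engineered to be survivable (at most the lethal fall threshold) yet deterministic in where it deposits the lemming, and no wall may be climbable; and (v)~an adversary-proofness argument showing that, with only Builder and Basher available and a single lemming, no unintended sequence of skill assignments --- mistimed builds, strokes at odd moments, building into the ``wrong'' cell --- lets the player reach the exit when the formula is false, nor traps the lemming when it is true. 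Once a complete catalogue of such physically realized gadgets is in hand (and test-played against the original game, as the paper promises), the correctness of the reduction follows by routine case analysis over the lemming's possible trajectories through each gadget.
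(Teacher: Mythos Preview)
Your membership argument is essentially the paper's: store one configuration, simulate nondeterministically, invoke Savitch.

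For hardness, your high-level plan---a \QBF\ reduction in which a single lemming is the evaluation token and each variable is a one-brick bit toggled by Builder/Basher---is the right idea and matches the paper's intent. Two differences are worth noting.

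First, a structural one: rather than building the quantifier machinery from scratch, the paper invokes \cite[Metatheorem~2.c]{viglietta}, which packages the whole \QBF\ game-tree traversal into an abstract avatar model with \emph{paths}, \emph{pressure plates} and \emph{doors}. This reduces the implementation burden to just three gadgets---a one-way path segment, a fork (``selector''), and a door that is opened or closed by the avatar passing through designated locations---with no need to design separate existential, universal, read, and clause gadgets. Your proposal essentially re-derives that metatheorem inline; the paper's route is shorter and modular.

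Second, and this is where your plan has a real gap: the paper's central enforcement device is \emph{deadly zones}, which you never mention. Almost every cell in the paper's construction is steel, and every corridor is lined with trap trigger areas positioned so that any Builder step laid outside a handful of designated $4\times 4$ permeable blocks is immediately lethal; the door gadget likewise uses deadly zones to \emph{force} the lemming to lay exactly one brick (closing the door) when arriving from the ``close'' side, and to kill it if it tries anything else. Your item~(v)---adversary-proofness against mistimed or misplaced builds---is precisely the problem these traps solve. With an unlimited supply of Builder skills and only steel geometry to constrain the player, it is very hard to see how you would prevent the lemming from bridging over a read-fork, bypassing a universal gadget, or simply staircasing past the intended control flow; steel stops Bashers but does nothing to stop Builders from laying bricks into empty space. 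The paper's answer is blunt and effective: make every unintended action fatal via traps. You should either adopt that mechanism or explain concretely how pure geometry suffices, and the latter looks doubtful.
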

\begin{proof}
The membership in \NPSPACE of \LEM is obvious, because each game configuration can be stored in polynomial space, and the configuration graph can be efficiently navigated (cf.~the proof of Theorem~\ref{t1}). The membership in \PSPACE thus follows from Savitch's Theorem (see~\cite{papadimitriou}).

For the \PSPACE-hardness part we implement the previously described framework, summarized by Metatheorem~\ref{meta}. In our implementation, the only lemming in the level will be the avatar, which will be controlled by the player via the assignment of Builder and Basher skills at very specific locations. We build the level in such a way that every $4\times 4$ cell is made of steel by default, unless it contains the trigger area of a deadly zone (recall from Section~\ref{s2} that deadly zones must be permeable).

Since the starting and exit locations are trivial to implement in Lemmings, we only have to show how to construct paths with forks and intersections, and how to construct stand-alone doors.

\begin{figure}[h]
\centering
\subfigure[]{\label{f2a}\includegraphics[scale=0.8]{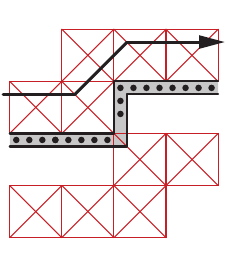}}\qquad\ 
\subfigure[]{\label{f2b}\includegraphics[scale=0.8]{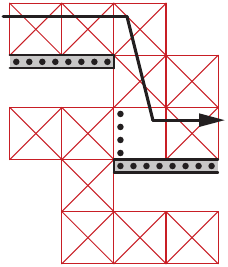}}\qquad\ 
\subfigure[]{\label{f2c}\includegraphics[scale=0.8]{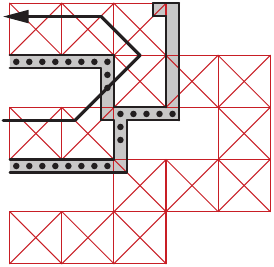}}\\
\subfigure[]{\label{f2d}\includegraphics[scale=0.8]{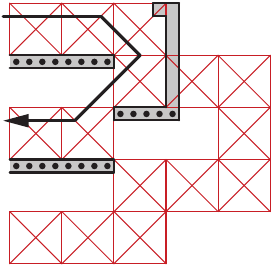}}\qquad\qquad
\subfigure[]{\label{f2e}\includegraphics[scale=0.8]{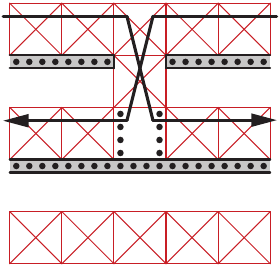}}
\caption{Paths. The lemming's pin must never touch the deadly zones.}
\label{f2}
\end{figure}

Figure~\ref{f2} shows how paths are implemented. White space denotes empty \emph{terrain} cells, gray space denotes solid cells, and black dots mark the positions occupied by the lemming's pin as the paths are traversed following the arrows. Each large crossed square represents a $4\times 4$ block containing the trigger area of a deadly zone. Collectively, deadly zones prevent the lemming from deviating from its path. Note that the one in Figure~\ref{f2b} is also a one-way path, because it cannot be traversed from right to left. To complete the intersection gadget in Figure~\ref{f2e}, we attach a one-way path to each side of it: one copy is attached to the lower-right part, and a symmetric copy to the lower-left part. This is to prevent the lemming from accidentally reversing direction after the intersection (perhaps when reaching a different gadget and acting ``inappropriately''), and then reenter the intersection gadget and take the wrong path.

Observe that, conceivably, when the lemming lands on the bottom platform in the gadgets in Figures~\ref{f2b} and~\ref{f2e}, it could start building a stairway there, because there is no deadly zone immediately above that platform. However, a Builder cannot be interrupted in that location by turning it into a Basher, because there is steel all around the deadly zones (cf.~Figure~\ref{f1c}). This is true unless the Builder has laid at least two steps of the stairway. But if it does so, it inevitably enters a deadly zone, no matter where it started building. Also note that some of the paths may be broken if the lemming becomes a Basher at the right time and removes some solid cells. However, this action has the only possible effect of rendering some paths unusable, and therefore it does not invalidate the construction.

\begin{figure}
\centering
\subfigure[Initial configuration]{\label{ffa}\includegraphics[scale=0.65]{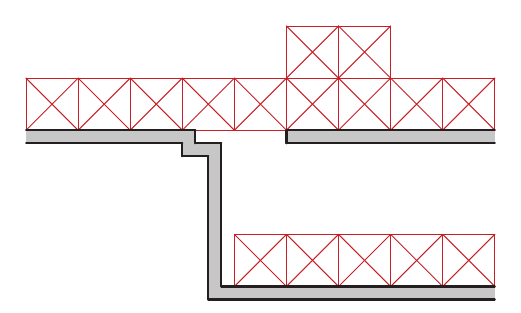}}\quad\ 
\subfigure[Building a bridge to the top path]{\label{ffb}\includegraphics[scale=0.65]{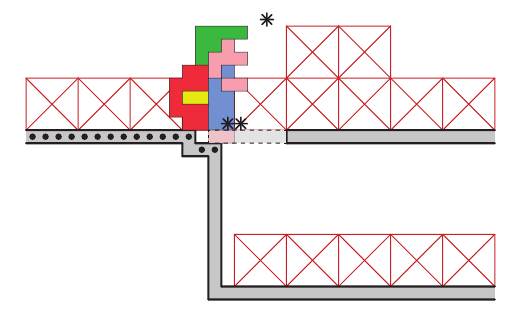}}\\
\vspace{0.75cm}
\subfigure[Bashing to avoid building further]{\label{ffc}\includegraphics[scale=0.65]{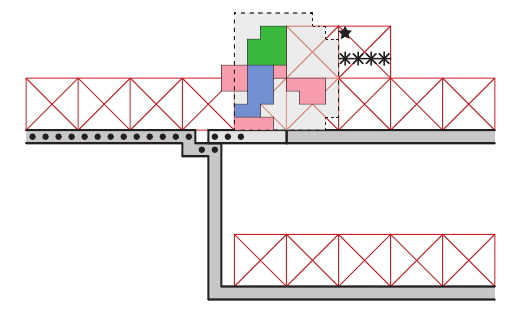}}
\quad\ 
\subfigure[Proceeding to the top path]{\label{ffd}\includegraphics[scale=0.65]{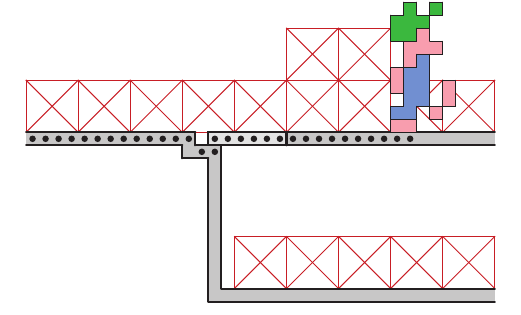}}\\
\vspace{0.75cm}
\subfigure[Removing the bridge]{\label{ffe}\includegraphics[scale=0.65]{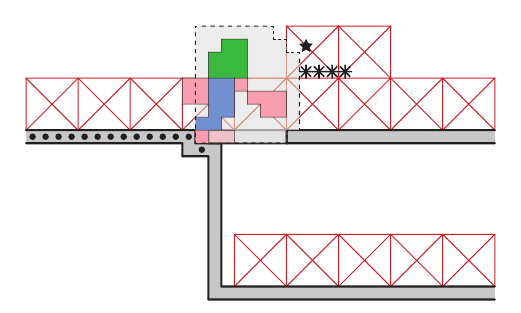}}\quad\ 
\subfigure[Proceeding to the bottom path]{\label{fff}\includegraphics[scale=0.65]{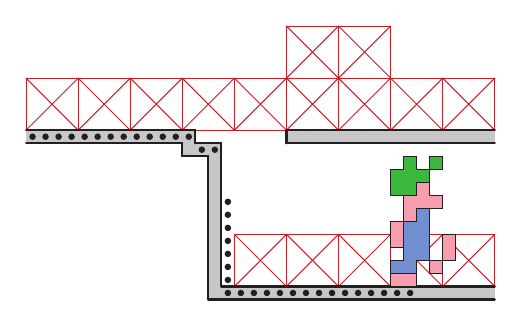}}
\caption{Fork gadget}
\label{f3a}
\end{figure}

It is easy to see that any directed graph can be embedded in the plane by suitably arranging copies of the five gadgets in Figure~\ref{f2} and their mirror images, provided that paths can also be forked and merged together. Two paths can be merged via a gadget similar to the one in Figure~\ref{f2b}: it is sufficient to extend the bottom platform toward the left, and lower the deadly zones that are in the way. Forks are implemented via the gadget in Figure~\ref{ffa}. The lemming enters from the left, and then the player may direct it to any of the two exits on the right. The deadly zones are the only permeable blocks (recall that all non-deadly blocks are made of steel in our construction), and they are positioned in such a way that a Builder can lay a single brick of a stairway (Figure~\ref{ffb}), climb on it, and then immediately become a Basher to stop laying further bricks (Figure~\ref{ffc}) and proceed to the right as a Walker (Figure~\ref{ffd}). Moreover, if a brick is already present when the lemming arrives, it can be removed by assigning a Basher skill right before the lemming climbs on it. This will cause the lemming to excavate precisely the 6-cell brick with one stroke (Figure~\ref{ffe}) and then fall down and proceed to the right as a Walker (Figure~\ref{fff}). Any other way of assigning skills is either ineffective, or deadly, or prevents the lower area from being reached, which never helps the lemming reach its final goal.

The stand-alone door implementation is depicted in Figure~\ref{fda}, where the door is considered open if and only if the central dashed rectangle is made of empty cells. The picture shows the locations where the traverse, open, and close paths are connected to the gadget, and the possible trails of the avatar's pin. A one-way path (not shown in the picture) is also attached after each of the three paths, to prevent the lemming from entering the gadget from the wrong direction.

\begin{figure}
\centering
\subfigure[Paths and possible trails]{\label{fda}\includegraphics[scale=0.7]{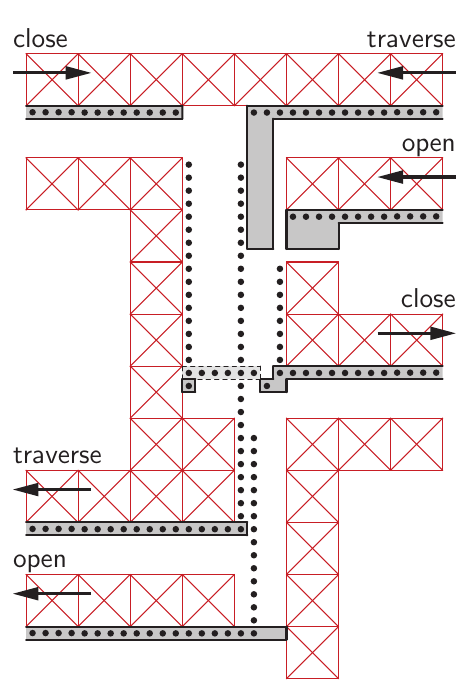}}\qquad
\subfigure[Bashing after closing the door]{\label{fdb}\includegraphics[scale=0.7]{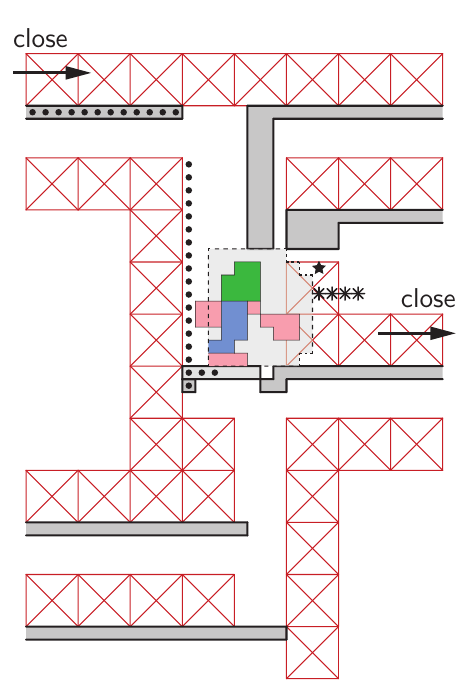}}\\
\vspace{0.5cm}
\subfigure[Opening the door]{\label{fdc}\includegraphics[scale=0.7]{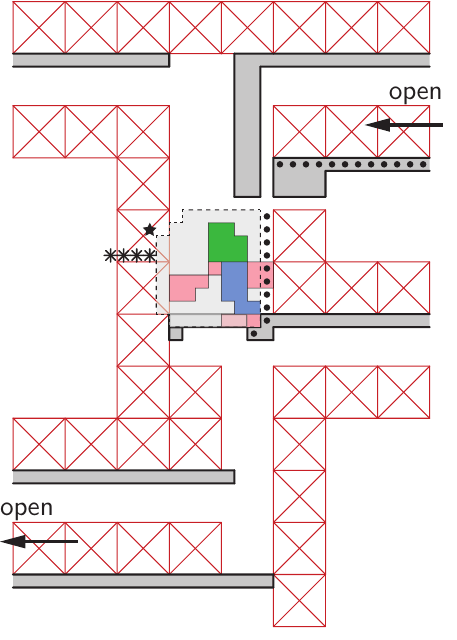}}
\qquad
\subfigure[Trying to traverse a closed door]{\label{fdd}\includegraphics[scale=0.7]{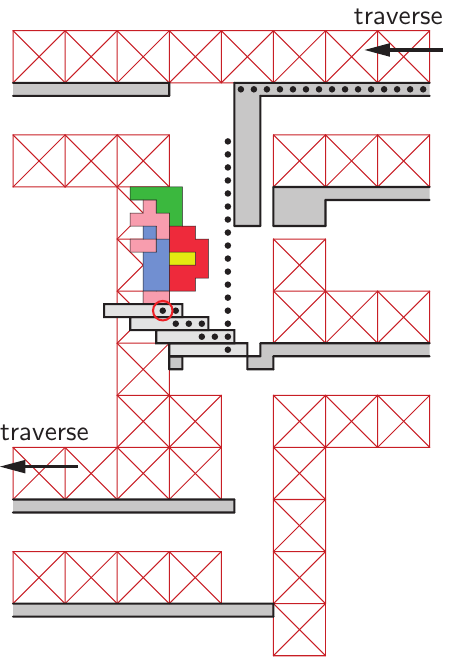}}
\caption{Door gadget}
\label{f3b}
\end{figure}

If the lemming is traversing the ``close'' path and the door is open, then it must lay a stairway brick (to avoid falling into the deadly zones underneath), thus closing the door, and then it must stop immediately, using a Basher skill, in order to proceed to the right without ``hitting the ceiling'' and turning around (Figure~\ref{fdb}). If it does anything different, it is bound to enter some deadly zone and perish. In particular, it is straightforward to see that if it tries to lay an additional stairway brick when the door is already closed, then it cannot become a Basher because the permeable block is too low, so it eventually hits the ceiling, turns left and dies in the deadly zones no matter what it does.

If the lemming is traversing the ``open'' path, then it is allowed to open the door by using a Basher skill, right before falling down, thanks to the permeable blocks on the left (Figure~\ref{fdc}). If it ever becomes a Builder, then it is bound to die in a few time units, as it can be easily verified.

When the lemming actually attempts to cross the door through the ``traverse'' path, it survives if and only if the door is open. Again, becoming a Builder at any time would kill it, no matter what it does next (Figure~\ref{fdd}).

This completes the construction. Each of these levels is either unsolvable with any amount of Builder and Basher skills, or solvable with exponentially many of them. Deciding whether it is one or the other is \PSPACE-complete.
\qed
\end{proof}

\section{Inapproximability}\label{s5}
Here we consider the restriction of \LEM to instances with only one type of skill. Therefore we have eight possible variations, one for each skill type: \CLI is the variation with only Climber skills, \FLO the variation with only Floater skills, and so on. By Theorems~\ref{t1} and~\ref{t1b}, the decision versions of all these problems are solvable in {\NP}\@. In this section we show that, as optimization problems, all of them are \APX-hard. Note that, due to~\cite[Theorem~2]{cormode}, \CLI and \FLO restricted to level with no deadly zones are solvable in {\P}\@. Hence, what makes \CLI and \FLO levels hard is precisely the presence of deadly zones.

We first introduce the \emph{2-choice gadget}. This will be used as a building block in our reductions. A 2-choice gadget is a device that can produce a single lemming, letting it out from one of two possible exits, named A and B. The player can decide whether the lemming will come out from A or B (or perhaps not come out at all), by doing the appropriate actions at the right times. We first prove that there are 2-choice gadgets for all the variants of \LEM we are considering.

\begin{lemma}\label{l2cho}
There are 2-choice gadgets for \CLI, \FLO, \BOM, \BLO, \BUI, \BAS, \MIN, and \DIG.
\end{lemma}
\begin{proof}
A 2-choice gadget for \CLI and \BUI is shown in Figure~\ref{f2c1}. The lemming enters the level from an entrance not shown in the picture, and keeps walking back and forth until the player decides to assign it a skill to let it out. Depending on when the skill is assigned, the lemming will exit either from A or from B.

\begin{figure}[h]
\centering
\includegraphics[scale=0.8]{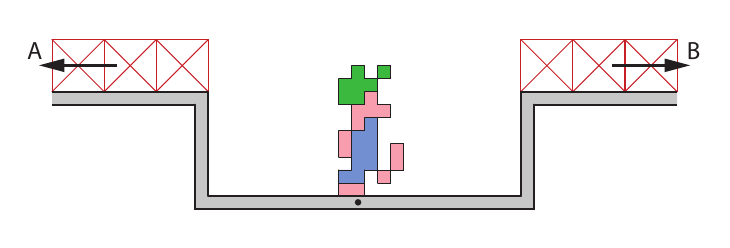}
\caption{2-choice gadget for \CLI and \BUI}
\label{f2c1}
\end{figure}

For all the \emph{terrain}-removing skills, the 2-choice gadget is the same, and it is illustrated in Figure~\ref{f2c2}, where the cells marked by diagonal lines are permeable. In \BAS, \MIN, and \DIG, the player can make the lemming dig through the left or the right side of the gadget, making it exit from A or B. In \BOM there are two lemmings instead of one (the second lemming is not shown in the picture), so the player can make one of them explode at the right time to let the other lemming escape from one of the two sides.

\begin{figure}[h]
\centering
\includegraphics[scale=0.8]{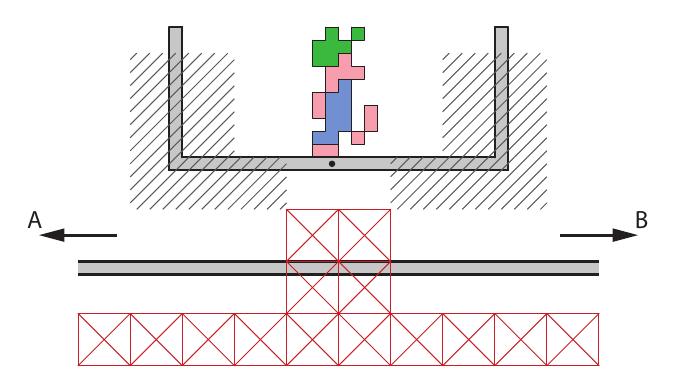}
\caption{2-choice gadget for \BOM, \BAS, \MIN, and \DIG}
\label{f2c2}
\end{figure}

The 2-choice gadget for \BLO is given in Figure~\ref{f2c3}. Depending on when the player assigns a Blocker skill to the front lemming, the rear lemming may exit from A or B (or remain stuck forever in the top-left part of the gadget). If the player does anything else, at least one lemming dies in the deadly zone on the right, while the other lemming possibly exits from B.

\begin{figure}[h]
\centering
\includegraphics[scale=0.8]{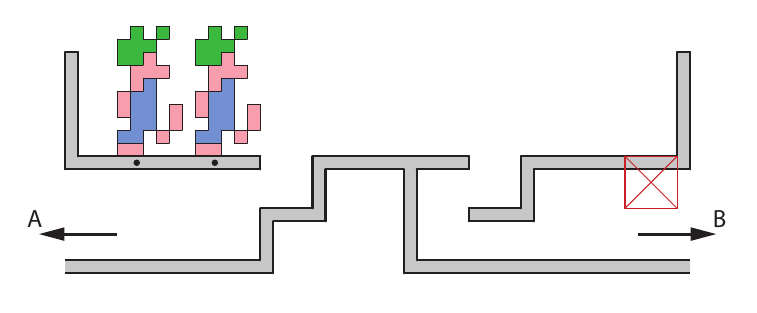}
\caption{2-choice gadget for \BLO}
\label{f2c3}
\end{figure}

Finally, Figure~\ref{f2c4} sketches a 2-choice gadget for \FLO. If the player lets the lemmings go without doing anything, the lemming on the right will reach the deadly zone first, and it will die. The deadly zone will remain harmless for a few time units, and the left lemming will be able to traverse it safely, exiting from B. On the other hand, if a Floater skill is assigned to the right lemming, its fall is slowed down, and the left lemming reaches the deadly zone first. In this case, the right lemming survives and exits from A.

\begin{figure}[h]
\centering
\includegraphics[scale=0.8]{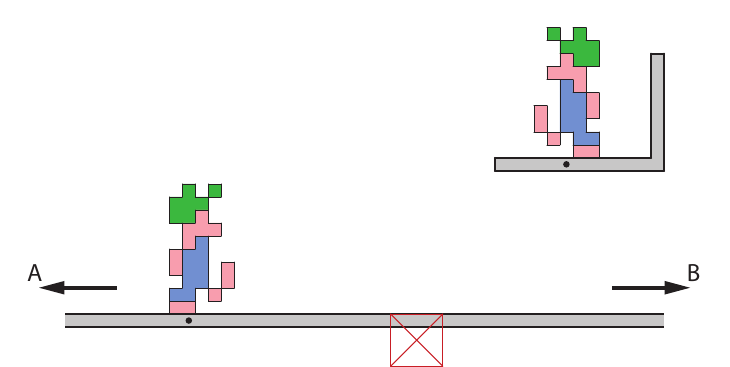}
\caption{2-choice gadget for \FLO}
\label{f2c4}
\end{figure}
\qed
\end{proof}

Next we show that there are \emph{paths} for all our variations of \LEM. The purpose of a path is to connect two different parts of a level. The same path can be traversed several times by lemmings going in one direction, and it should prevent leakage with other paths, no matter what the player does. Paths should also be able to cross each other, again with no leakage. We allow the player to ``destroy'' a path, making it unusable, as long as destructing it does not cause any leakage with other paths. In contrast with Section~\ref{s3}, here paths are not required to fork.

\begin{lemma}\label{lpath}
There are paths for \CLI, \FLO, \BOM, \BLO, \BUI, \BAS, \MIN, and \DIG.
\end{lemma}
\begin{proof}
The first four gadgets in Figure~\ref{f2} work also in this case, with minor modifications. In fact, the deadly zones surrounding the paths prevent the lemmings from going anywhere else, no matter what skills the player assigns them (we have to assume the delay parameter of each deadly zone to be $0$, since several lemmings may be traversing a path at the same time). Hence, we just have to check that the paths can indeed be safely traversed by lemmings. This is true for Walkers, as Figure~\ref{f2} shows. However, the third and fourth gadget do not work for Climbers, because they climb the wall on the right instead of turning around, and they perish in the deadly zones. Notice, though, that these deadly zones are not required in \CLI, and so they can be removed. Now, a lemming climbing the right walls will hit the ceiling on top (notice the protruding solid cell), turn around, fall down, and proceed as intended.

The intersection gadget in Figure~\ref{f2e} requires a more careful analysis. In \CLI and \FLO, it clearly works as intended. In \BOM, the player could destroy it by making a lemming explode in midair, but that would only make the gadget unusable, if anything. In \MIN and \DIG, either a lemming would not be able to dig in a certain spot due to steel, or it would dig some ground and make part of the gadget unusable. The same goes for \BAS, except that a lemming could either be unable to bash in a certain spot, or bash ``ineffectively'', i.e., without altering \emph{terrain} or changing direction. In \BUI, it is easy to see that a lemming could build part of a stairway when it lands on the lower platform. Then it would die in a deadly zone, but this would effectively raise the lower platform by one cell. A second lemming would be able to raise it further, etc. However, this can clearly happen at most four times, and then any lemming reaching the gadget would either be forced to traverse it as intended, or it would pass through a deadly zone and perish.

The only case left to consider is \BLO, and unfortunately the intersection gadget in Figure~\ref{f2e} does not work here, because turning a lemming into a Blocker in the lower platform would make subsequent lemmings turn around and take the wrong path. Hence we replace that gadget with the one in Figure~\ref{f2c4}.

\begin{figure}[h]
\centering
\includegraphics[scale=0.8]{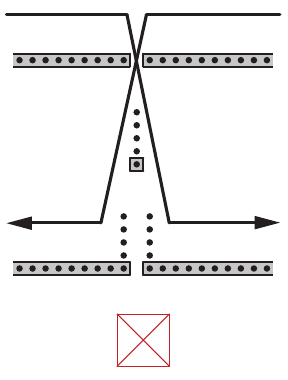}
\caption{Intersection gadget for \BLO}
\label{f2c4}
\end{figure}

Now, if the player puts a Blocker on the lower platform, the lemmings that turn around at it are either bound to fall into the deadly zone or become Blockers themselves (and remain Blockers forever). Even if the player puts a Blocker on the central isolated cell, that Blocker is ineffective, because it cannot interact with the lemmings on the upper platform (as they are out of range), or with the lemmings that fall on the same square.
\qed
\end{proof}

With the gadgets constructed so far, we are able to give a unified \APX-hardness reduction that works for all eight variations of \LEM.

\begin{theorem}\label{t3}
\LEM is \APX-hard, even restricted to levels in which only one type of skill is available (whatever the skill is).
\end{theorem}
\begin{proof}
For each of the eight possible restrictions of \LEM to levels with only one skill type, we give an L-reduction from the \APX-complete problem \MTSAT (refer to~\cite{ausiello}), in which the number of satisfied clauses of a 3-CNF Boolean formula has to be maximized. We describe a unified reduction framework that only uses the 2-choice gadgets and the paths constructed in Lemmas~\ref{l2cho} and~\ref{lpath}. The framework is the same, regardless of what skill is available to the player.

We need \emph{variable gadgets} and \emph{clause gadgets}, wired together with paths. A gadget for variable $x$ is sketched in Figure~\ref{f4a}.
\begin{figure}[h]
\centering
\includegraphics[scale=0.4]{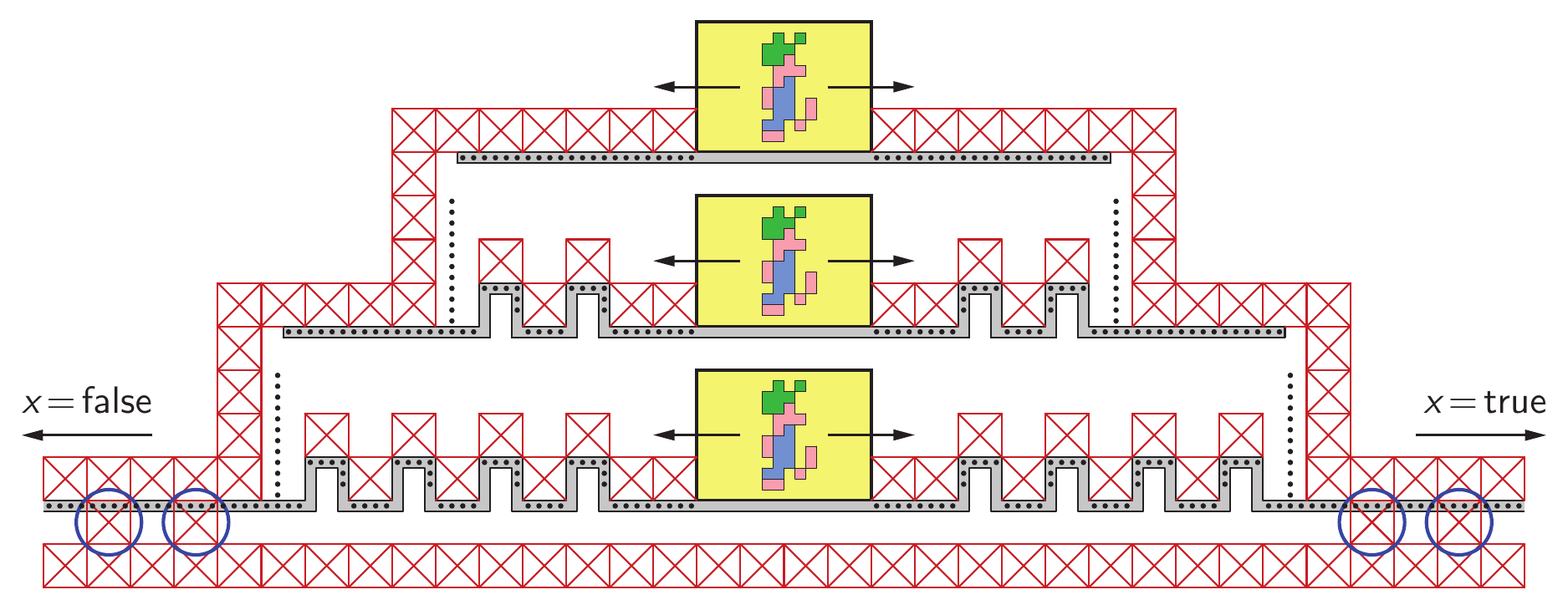}
\caption{\APX-hardness variable gadget}
\label{f4a}
\end{figure}
It is made of $2k+1$ \emph{layers}, where $k$ is the number of occurrences of variable $x$ in the formula, and each layer contains a 2-choice gadget. (In the figure, $k=1$, and each 2-choice gadget is represented by a rectangle containing a lemming.) Each lemming can exit from the left side or from the right side of its respective 2-choice gadget. All lemmings exiting from the same side eventually reach a common path, on which a row of $k+1$ traps is found, each with a delay parameter of a few time units (these traps are circled in the figure). The exact value of the delay parameter is irrelevant, as long as it is a large-enough constant.

Because each trap kills at least one lemming upon traversal, no two lemmings can exit the variable gadget from different sides and survive, since that would kill at least $2k+2$ lemmings. However, if the player makes all the 2-choice gadgets produce a lemming from the same side, then exactly $k$ lemmings safely exit the variable. This is because layers have different amounts of small \emph{bumps} whose purpose is to ``synchronize'' lemmings. Since lemmings falling from higher layers take a longer route, those on the lower layers are slowed down by the bumps, in such a way that all lemmings eventually reach the row of $k+1$ traps approximately at the same time, and each trap kills exactly one lemming.

Note that the player can conceivably destroy some of the bumps or parts of the layers, or put Blockers somewhere, etc. However, this deviant behavior only gets more lemmings killed, and it can in no way allow lemmings to safely exit the variable gadget from two different sides.

So, the truth value of $x$ will be encoded by the side from which $k$ (or fewer) lemmings exit the corresponding gadget. After coming out of the true (respectively, false) side of the variable gadget, the group of lemmings reaches, one by one, all the clause gadgets containing a positive (respectively, negative) occurrence of $x$. The group eventually falls into a pool of water, so that any remaining lemming is killed.

In Figure~\ref{f4b}, a gadget for clause $(\ell_1 \vee \ell_2 \vee \ell_3)$ is sketched, where once again each rectangle containing a lemming represents a 2-choice gadget.
\begin{figure}[h]
\centering
\includegraphics[scale=0.4]{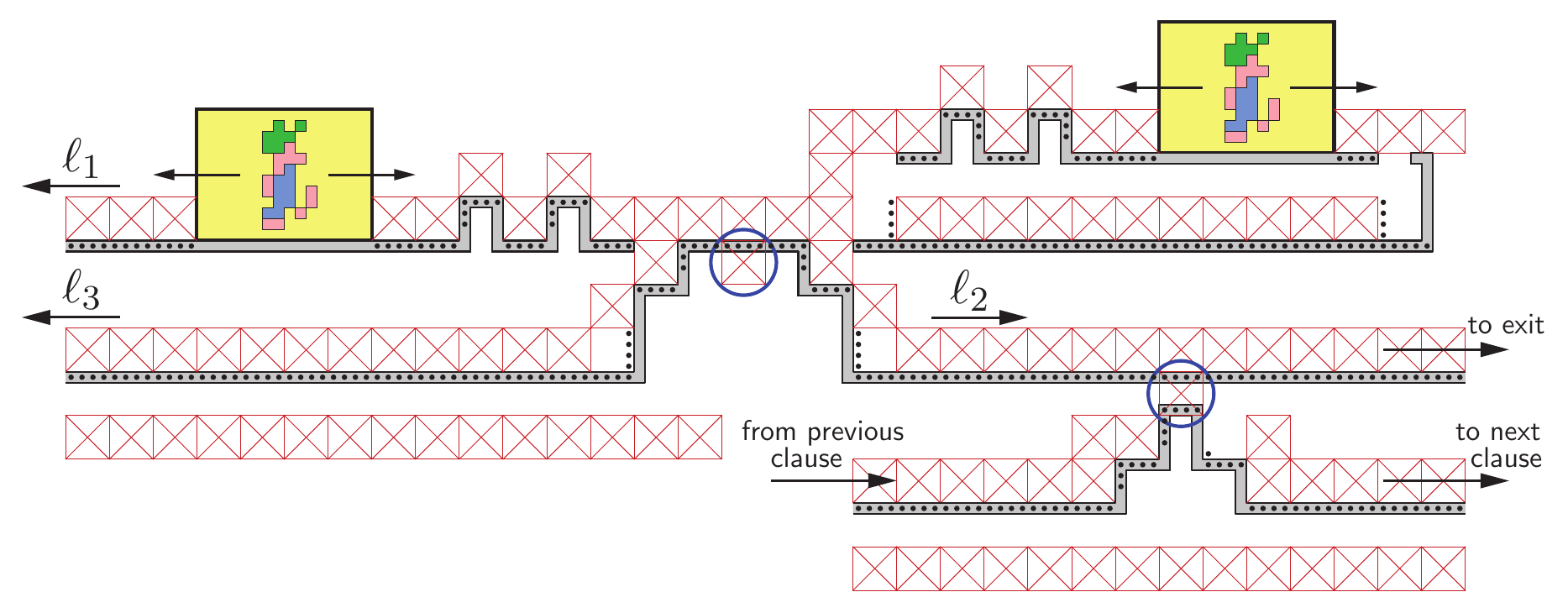}
\caption{\APX-hardness clause gadget}
\label{f4b}
\end{figure}
The clause gadget has three exits, one for each literal in the clause. If the player makes the left 2-choice gadget produce a left-facing lemming, it exits the gadget from the path corresponding to $\ell_1$. Meanwhile, no matter which direction the lemming exiting the other 2-choice gadget is facing, it is bound to die in the central circled trap (or become a Blocker or a Bomber, which effectively eliminates it, or dig into some \emph{terrain} cells, thus falling into some deadly zone, etc.). On the other hand, if the lemming produced by the leftmost 2-choice gadget is facing right, it will fall into the central trap (again, unless it is eliminated before). If the second lemming is facing left, it will reach the trap first, making it temporarily harmless, and allowing the first lemming to reach the $\ell_2$ exit. Otherwise, if the second lemming is facing right, it will take a longer route, and it will reach the trap shortly after the first lemming, thus surviving and reaching the $\ell_3$ exit. There are sequences of bumps along these paths, which delay the lemmings in such a way that they reach the central trap at the proper times. In other words, this is a \emph{3-choice gadget}, which allows the player to choose from which of three possible paths a single lemming will exit.

After each of the three exits, another trap is encountered (only shown for literal $\ell_2$ in Figure~\ref{f4b}), and then a path safely leads to a level exit. The same trap is also traversed by a path underneath, coming from the variable gadget corresponding to the same literal. This is done in such a way that the clause lemming can be saved by the group of variable lemmings if and only if the literal is true according to the chosen assignment.

Observe that there is no way for the player to make any variable lemming reach the exit, unless Builders are available. In \BUI, the player could conceivably make several variable lemmings quickly start building stairways on top of each other while the trap is still harmless, and some of them may be able to reach the upper path that leads to the exit. However, this can be prevented by ``switching'' the two paths, i.e., placing the variable path above the literal path, in such a way that no variable lemming could reach the literal path, no matter how Builder skills are assigned.

In order to guarantee synchronization and make sure that the trap is reached by the head of the group of variable lemmings just a couple of time units before the upcoming clause lemming, the variable and literal paths may have to be modified (this is not shown in the figure). It is easy to see that, if one of the two paths is too short, it can be stretched by scaling up the whole construction and making the path ``squiggle'', which effectively multiplies the ratio of the two path lengths by a constant factor. After that, for finer synchronization adjustments, a series of bumps can be added to either path.

Note that in Lemmings at most one action per time unit can be performed by the player. This means that the player cannot interact with two 2-choice gadgets exactly at the same time. However, note that a 2-choice gadget can be made arbitrarily wider without losing its relevant properties. In particular, if the width of the $i$-th 2-choice gadget is proportional to $i$, the player can activate 2-choice gadgets sequentially, each at a different time unit. Moreover, while constructing the paths and adjusting their lengths, we may assume that the player will interact with each 2-choice gadget at one of only two possible specific time units, depending on which side of the gadget the lemming will exit. This makes the construction not depend on the actual choice that the player will make at each 2-choice gadget.

Clearly, only the clause lemmings can possibly be saved in these levels, and each of them may indeed be saved if and only if at least one literal of its clause gadget is true according to the assignment encoded by the corresponding variable lemmings. Therefore, the reduction preserves the optimal value, and any solution that saves $n$ lemmings in one of these levels can be trivially converted into a variable assignment that satisfies exactly $n$ clauses of the corresponding Boolean formula. As a consequence, this is an L-reduction.
\qed
\end{proof}

It follows that the optimal number of saved lemmings is not approximable with a too small relative error, even in these severely restricted sub-games of Lemmings.

\begin{corollary}\label{cor1}
Computing approximate solutions to \LEM with a relative error lower than $1/8$ is \NP-hard, even for levels with only one type of skill (whatever the skill is).
\end{corollary}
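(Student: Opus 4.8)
}}
The plan is to feed the sharp inapproximability of \MTSAT through the value-preserving reduction already built in the proof of Theorem~\ref{t3}. Recall that, by a theorem of H{\aa}stad (see \cite{ausiello}), for every $\varepsilon>0$ it is \NP-hard, given a 3-CNF formula all of whose clauses have exactly three literals, to find a truth assignment satisfying more than a $7/8+\varepsilon$ fraction of the maximum number of simultaneously satisfiable clauses; equivalently, \MTSAT cannot be approximated in polynomial time with relative error bounded by any fixed constant below $1/8$ unless $\P=\NP$. I would argue the corollary by contradiction: suppose some polynomial-time algorithm $A$, on every \LEM level that uses only Climber skills, returns a feasible solution whose value exceeds $(1-\rho)$ times the optimum, for some fixed constant $\rho<1/8$.

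First I would extract from the proof of Theorem~\ref{t3} the two features of the construction that matter here. The map $\varphi\mapsto\mathcal L_\varphi$ from 3-CNF formulas to Climber-only levels is computable in polynomial time and \emph{exactly} value-preserving: the maximum number of lemmings one can save in $\mathcal L_\varphi$ equals the maximum number of clauses of $\varphi$ that a single assignment can satisfy. Moreover, any feasible solution of $\mathcal L_\varphi$ saving $n$ lemmings can be turned, in polynomial time, into an assignment satisfying exactly $n$ clauses of $\varphi$. Both facts are asserted at the end of that proof; I would restate them and, to be safe, note that the construction goes through unchanged when every clause has exactly three literals, which is the regime H{\aa}stad's bound addresses.

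Then the contradiction is immediate. Given such a $\varphi$, build $\mathcal L_\varphi$, run $A$ on it, and translate the output back to an assignment. Since $A$ saves more than $(1-\rho)$ times the optimum of $\mathcal L_\varphi$, and that optimum equals the \MTSAT-optimum of $\varphi$, the resulting assignment satisfies more than a $(1-\rho)=7/8+(1/8-\rho)$ fraction of the optimal number of clauses — a polynomial-time \MTSAT approximation beating H{\aa}stad's threshold, hence $\P=\NP$. Every level in the image of the reduction uses only Climber skills, so the same conclusion holds for \LEM restricted to that class, which is exactly the statement of the corollary.

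I do not expect a genuine obstacle here: the combinatorial work lives entirely in Theorem~\ref{t3}, and all that is added is plugging in the correct external bound. The only points requiring a moment's care are (a) checking that the reduction of Theorem~\ref{t3} is value-preserving in the strong, constant-free sense — not merely up to the multiplicative slack a generic L-reduction permits — so that the $7/8$ threshold transfers verbatim rather than being diluted, and (b) reading \qq{relative error lower than $1/8$} as \qq{bounded by a fixed constant below $1/8$}, which is precisely what lets one absorb H{\aa}stad's $\varepsilon$.
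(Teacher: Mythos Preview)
Your proposal is correct and follows essentially the same approach as the paper: both combine H{\aa}stad's $7/8$ inapproximability bound for \MTSAT with the reduction of Theorem~\ref{t3}, using the fact that the reduction preserves optima exactly and that solutions translate back without loss. The paper compresses this into a single line by noting that the L-reduction has parameters $\beta=\gamma=1$ and invoking~\cite{hastad} directly, whereas you unpack the contradiction argument explicitly; the substance is the same.
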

\begin{proof}
The proof of Theorem~\ref{t3} describes a mapping from a 3-CNF Boolean formula to a level of \LEM where only a given type of skill is available, such that $n$ clauses of the formula are satisfiable if and only if $n$ lemmings can be saved. It follows that each of the eight sub-games of \LEM has the same inapproximability features of \MTSAT. More formally, we gave an L-reduction with $\beta=\gamma=1$ (refer to~\cite[Definition~8.4]{ausiello}) from \MTSAT, and therefore our claim follows from~\cite[Theorem~6.1]{hastad}, which states that \MTSAT is \NP-hard to approximate with a relative error lower than $1/8$.
\qed
\end{proof}

\section{Conclusions and Open Problems}\label{s6}

In this paper we studied the computational complexity of a generalization of the Lemmings video game. We modeled it in Section~\ref{s2}, as an optimization problem in which the maximum number of lemmings have to be saved. In the model we retained all the mechanics of the original game (including its glitches), while allowing arbitrarily large levels containing arbitrarily many objects.

Previous papers on Lemmings studied some simplified versions of the game, containing only a subset of its elements and simplified mechanics. The game was known to be \NP-hard, even restricted to levels with only one lemming, and it was known to be in \NP if the time limit is polynomial in the size of the level. It was also conjectured to be \PSPACE-complete.

In Section~\ref{s3} we extended the membership in \NP to all levels in which either the Builder skills are polynomially many, or the Bashers, Miners, and Diggers skills combined are polynomially many (regardless of the time limit). This result can be interpreted in the following way: Bashers, Miners, and Diggers can ``undo'' the actions of Builders, in that they destroy solid cells, while Builders make new ones. If only one of these two types of actions can be performed exponentially many times, then the solution of a given level has a short description, i.e., the game belongs to {\NP}\@. On the other hand, if solid cells can be created and destroyed exponentially many times, then a level's solution is not expected to have a short description, i.e., the game is expected to be \PSPACE-complete. In Section~\ref{s4} we proved that this is indeed the case: if exponentially many Builders and Bashers are available, Lemmings is \PSPACE-complete, even restricted  to levels with only one lemming.

Although we settled the most compelling open problem concerning the Lemmings game, other interesting questions remain unanswered. Indeed, we know that Lemmings is \NP-complete even restricted to levels with no deadly zones, while our \PSPACE-completeness reduction abundantly uses them. Therefore, as Table~\ref{tab3} suggests, there is still a gap between \NP-complete and \PSPACE-complete instances. We seek to determine the complexity of the restriction of Lemmings to levels with no deadly zones, but with arbitrary time limit and arbitrarily many available skills. In fact, the complexity of this sub-game is unclear even if just one lemming is present in each level. The absence of deadly zones gives the lemming a much greater freedom of movement, and possibly the ability to reach an easily-determined portion of the level, given enough Builder, Basher, Miner, and Digger skills. It would not be surprising if this sub-game was actually in \P, despite being \NP-complete if only Digger skills are available.

\begin{table}[h]
\begin{center}
{\renewcommand{\arraystretch}{1.25}
\begin{tabular}{c||c|c|c||c|c|c}
Climbers & $\infty$ & $\infty$ & $\infty$ & $\infty$ & $\infty$ & $\infty$ \\
\hline
Floaters & $\infty$ & $\infty$ & $\infty$ & $\infty$ & $\infty$ & $\infty$ \\
\hline
Bombers & $\infty$ & $\infty$ & $\infty$ & $\infty$ & $\infty$ & $\infty$ \\
\hline
Blockers & $\infty$ & $\infty$ & $\infty$ & $\infty$ & $\infty$ & $\infty$ \\
\hline
Builders & poly & $\infty$ & $\infty$ & poly & $\infty$ & $\infty$ \\
\hline
Bashers & $\infty$ & $\infty$ & $\infty$ & $\infty$ & $\infty$ & $\infty$ \\
\hline
Miners & $\infty$ & $\infty$ & $\infty$ & $\infty$ & $\infty$ & $\infty$ \\
\hline
Diggers & $\infty$ & $\infty$ & $\infty$ & $\infty$ & $\infty$ & $\infty$ \\
\hline\hline
Time & $\infty$ & $\infty$ & $\infty$ & $\infty$ & $\infty$ & $\infty$ \\
\hline
Hazards & & & \cmark & & & \cmark \\
\hline
1 lemming & & & & \cmark & \cmark & \cmark \\
\hline\hline
& \NP-complete & \textbf{?} & \PSPACE-complete & \NP-complete & \textbf{?} & \PSPACE-complete
\end{tabular}}
\end{center}
\caption{Some open problems}
\label{tab3}
\end{table}

Finally, in Section~\ref{s5} we proved that the maximum number of lemmings that can be saved in a given level is \NP-hard to approximate with a relative error lower than $1/8$. This holds even if only one skill type is available, whatever the skill is. This result is especially interesting in the case of Climber and Floater skills, because it was previously known that levels with only Climbers and Floaters are solvable in \P, provided that they contain no deadly zones. Hence, we conclude that what makes levels hard in these cases are precisely the deadly zones.

Several obvious open problems arise from these results. First, it would be interesting to identify more sub-games that are solvable in {\P}\@. Two possible candidates would be the sub-games with only Bombers or only Blockers, and no deadly zones. Also, we conjecture that stronger inapproximability results may be obtained, perhaps for levels with more than one available skill type. Finally, the parameterized complexity of Lemmings may be studied, for instance when the parameter is the number of available skills of a specific type, or the height or width of a level, etc.

\end{document}